\newtheorem{corollary}{\textbf{Corollary}}
\newtheorem{proposition}{\textbf{Proposition}}
\newtheorem{remark}{\textbf{Remark}}
\newcommand{\diag}{\mathop{\mathrm{diag}}}
\newcommand{\Tr}{\mathrm{Tr}}
\newcommand{\SNR}{\mathrm{SNR}}
\newcommand{\true}{honest }
\newcommand{\True}{Honest }
\begin{document}
\bibliographystyle{IEEEtran}

\title{Impact of Channel State Misreporting on Multi-user Massive MIMO Scheduling Performance}
\author{Zhanzhan Zhang$^{\dag}$, Yin Sun$^{*}$, Ashutosh Sabharwal$^{\ddag}$, \IEEEmembership{Fellow,~IEEE}, and Zhiyong Chen$^{\dag}$\\
$^{\dag}$Cooperative Medianet Innovation Center, Shanghai Jiao Tong University, P. R. China\\
$^{*}$Dept. of ECE, Auburn University, AL, USA\\
$^{\ddag}$Dept. of ECE, Rice University, TX, USA\\
Email: \{mingzhanzhang, zhiyongchen\}@sjtu.edu.cn, yzs0078@auburn.edu, ashu@rice.edu
\thanks{Yin Sun was partially supported by the Office of Naval Research under Grant N00014-17-1-2417.

Ashutosh Sabharwal was partially supported by NSF grants CNS-1518916 and CNS-1314822.

Zhiyong Chen was partially supported by the National Natural Science Foundation of China (Grant No. 61671291, 61528103, and 61521062), Huawei HIRP Project under Grants YB2015040062 and STCSM15DZ2270400.}}
\maketitle

\begin{abstract}
The robustness of system throughput with scheduling is a critical issue. In this paper, we analyze the sensitivity of multi-user scheduling performance to channel misreporting in systems with massive antennas. The main result is that for the round-robin scheduler combined with max-min power control, the channel magnitude misreporting is harmful to the scheduling performance and has a different impact from the purely physical layer analysis. Specifically, for the homogeneous users that have equal average signal-to-noise ratios (SNRs), underreporting is harmful, while overreporting is beneficial to others. In underreporting, the asymptotic rate loss on others is derived, which is tight when the number of antennas is huge. One interesting observation in our research is that the rate loss ``periodically'' increases and decreases as the number of misreporters grows. For the heterogeneous users that have various SNRs, both underreporting and overreporting can degrade the scheduler performance. We observe that strong misreporting changes the user grouping decision and hence greatly decreases some users' rates regardless of others gaining rate improvements, while with carefully designed weak misreporting, the scheduling decision keeps fixed and the rate loss on others is shown to grow nearly linearly with the number of misreporters.
\end{abstract}

\section{Introduction}
Multi-user multiple-input multiple-output (MU-MIMO) continues to be a key technology for current and future networks \cite{MUMIMO13inLTE}; notably massive MIMO systems are aiming to support tens of users in one time-frequency resource block in future generations systems \cite{massiveMIMOnextg} to significantly increase spectral efficiencies. Joint transmissions in one time-frequency resource block highlight the importance of scheduling, since in a joint transmission, the downlink packets to different users share the base-station (BS) transmission power. The transmission power allocation among users is often performed based on multiple factors, like path loss coupled with rate and reliability demands. And channel knowledge is crucial for scheduling and hence most systems have extensive protocol support for mobile clients to report channel state information (CSI) to the base-station.

In this paper, we study the impact of channel state \emph{misreporting} on scheduler performance. Channel state misreporting can be either intentional (e.g.,\ malicious attacker or greedy users) or unintentional (e.g.,\ software errors); an example of coordinated multi-IoT device attack was reported recently~\cite{IoT-attack}.
It is clear that the scheduler performance will depend on the accuracy of channel state information, since the multi-user capacity region (implicitly or explicitly) computed by the scheduler depends on the reported channel state information. We note that in frequency-division duplex (FDD), which is the more prevalent form of networks, base-stations \emph{have to} rely on mobiles to report their measured downlink channel conditions, thereby creating a possibility of potential misreporting.

Note that the MIMO channel misreporting can occur in multiple ways \cite{tung2014vulnerability}: (i)~misreporting the channel direction, and (ii)~misreporting of the channel magnitude, which further includes \emph{underreporting} and \emph{overreporting} \cite{kim2014falseCSIreport}. In this paper, we study the second type of misreporting, i.e.,\ misreporting of the channel magnitudes or downlink signal-to-noise ratio (SNR) measured by the mobile. While both forms of misreporting are crucial, we will focus on channel SNR misreporting as a first concrete step in understanding the impact of misreporting on scheduled system performance. Additionally, misreporting the channel direction is easier to detect by some angle-of-arrival (AoA) based techniques, see e.g., \cite{xiong2013securearray, li2017analog}, since the AoAs between downlink and uplink in FDD massive MIMO are correlated \cite{xing15AoA}. In contrast, channel magnitude misreporting is difficult to detect~\cite{kim2014falseCSIreport}, since the channel magnitudes change naturally in wireless links due to large-scale and small-scale channel fading. 

In multi-user downlink transmissions, the scheduler has to make decisions on \emph{which} users are grouped together in each resource block, and \emph{how} the power is divided among the users of each group. These decisions depend on the type of downlink precoding methods (e.g., zero-forcing (ZF) or conjugate beamforming) used in the physical layer, as different methods lead to different achievable rates. Since each resource block can accommodate tens of users, the number of misreporting users also impacts the scheduling performance~\cite{IoT-attack}. Hence, the actual rate loss depends on user grouping, power allocation, multi-user beamforming, and the number of misreporters, which make the overall analysis fairly challenging.

In this work, we consider the multi-user massive MIMO network in a single cell operating in the FDD mode, where the BS is equipped with a large number of antennas. For tractability, we focus on the round-robin (RR) scheduling that uses the channel magnitude based user grouping, combined with max-min fairness power allocation and zero-forcing precoding. As a point of comparison, we also analyze the performances of two additional user grouping methods: semi-orthogonal user selection (SUS) \cite{SUSZF} and random selection. Both the channel underreporting and overreporting are analyzed. To the best of the authors' knowledge, this is the first work to investigate the impact of channel misreporting on multi-user scheduling in massive MIMO systems.

The main contributions of the paper are as follows:
\begin{itemize}
  \item We first analyze a homogeneous case, where all users have equal average SNRs. We find that the channel magnitude underreporting by a few users can decrease other users' data rates, while channel overreporting is beneficial to others. In underreporting, we derive the asymptotic closed-form expression of the rate loss on \true users, which is shown to be a quite accurate approximation for a moderate number of antennas.


  \item 
      In underreporting, we discover that there is a ``periodicity" in the rate loss as a function of the number of misreporting users, with the ``period" being equal to the number of users per resource block. That is, the rate loss function ``periodically'' increases and decreases with the number of misreporting users. The reason is that the rate loss mainly comes from the \emph{infected} resource blocks that support both the misreporting and \true users, and the number of \true users in infected resource blocks varies periodically. 
  The seemingly counter-intuitive result demonstrates how the scheduler performance loss analysis can yield different results compared to purely physical layer analyses without scheduling \cite{Muk10ICASSP, tung2014vulnerability} where the performance loss increases monotonically with more misreporters.

  \item For the heterogeneous case where users have various average SNRs, both underreporting and overreporting can decrease the scheduler performance, which is different from the homogeneous case.
      We propose two efficient misreporting strategies: user grouping changed misreporting and user grouping unchanged misreporting. Specifically, the former misreporting strategy does its best to harm a part of the \true users by changing the user grouping, yet it also benefits the other \true users.
      In contrast, the latter misreporting strategy keeps the user grouping fixed by carefully designing the misreporting levels. With the latter strategy, no \true user gets rate improvement, and the average rate loss is shown to grow nearly linearly with the number of misreporters.
\end{itemize}


\vspace{0mm}

Channel misreporting on massive MIMO scheduling performance differs significantly from the prior works. The works in \cite{tung2014vulnerability} and \cite{Muk10ICASSP} revealed the new threats of channel misreporting on precoding process, without considering user grouping and scheduling. However, one misreporter may fail to harm other users under user scheduling, since it may not be selected or can be scheduled to a different resource block. In addition, the similar issue on scheduling was addressed in \cite{kim2014falseCSIreport} and \cite{Racic10TMC} for the current 3G and LTE systems, where only one user is selected in one resource block. Thus, the adversaries can harm other users and cause long inter-packet delays by simply occupying consecutive resource blocks. However, in multi-user case, it's more difficult to occupy many consecutive resource blocks, since each resource block serves multiple users. In contrast, the threats of channel misreporting on multi-user scheduling consist of two new features that have not been studied before: (i) channel misreporting leads to unfair power allocation among honest and misreporting users that are both served in the infected resource block, and (ii) the changes in user grouping decisions can greatly affect the throughput of the users in different resource blocks.



\textit{\textbf{Notations:}} Boldface uppercase and boldface lowercase letters  denote matrices and column vectors, respectively. $\mathbb{E}\{\cdot\}$, $\|{\cdot}\|$, $\Tr(\cdot)$, $(\cdot)^H$, $(\cdot)^T$, $(\cdot)^{-1}$ stand for the expectation, Euclidean norm, the trace of a square matrix, the conjugate transpose, the transpose and the inverse of a matrix, respectively. $\mathcal{CN}(\mathbf{x}, \mathbf{\Sigma})$ represents the distribution of a circularly symmetric complex Gaussian vector with mean vector  $\mathbf{x}$ and covariance matrix $\mathbf{\Sigma}$.  $\textbf{I}_M$ denotes an $M{\times} M$ identity matrix.

\vspace{-3pt}

\section{Model}
\vspace{-2pt}
\begin{figure}
\setlength{\abovecaptionskip}{-1mm}
  \centering
  \includegraphics[width=2.7in]{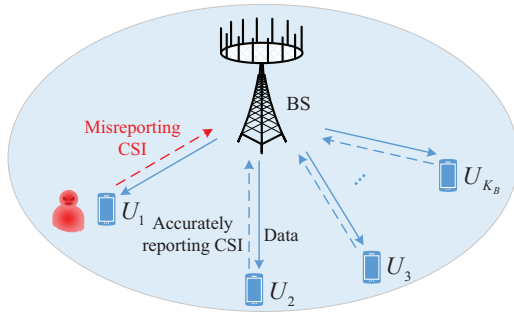}
  \caption{Multi-user massive MIMO downlink in FDD networks.}\label{systemModel}
  \vspace{-7mm}
\end{figure}

\vspace{-3pt}
\subsection{System Model}
As illustrated in Fig. \ref{systemModel}, we consider a multi-user massive MIMO downlink network which operates in FDD mode. The BS is employed with a massive antenna array which has $M$ ($M{\gg} 1$) antennas, and there are total $K$ users with one antenna at each user. For downlink transmissions, the BS requires users to report the channel estimation. The downlink channel estimation and uplink feedback phases are assumed to be error-free, which allows us to focus on the impact of error caused by misreporting. The round-robin scheduler is adopted to schedule users that are grouped based on their channel magnitudes. Denote by $K_{B}$ the number of users that the BS can beamform to in one time-frequency resource block, and we assume $K=T K_{B}$ without loss of generality, where $T$ consecutive resource blocks constitute one RR scheduling period, during which all users are served once. And denote by $K_M$ the number of misreporting users. Let $P$ denote the power constraint transmitted by the BS in each resource block. Moreover, taking fairness into consideration,  the BS is assumed to use the max-min power control that ensures equal received SNR for the users in each resource block.

In the downlink data transmissions, the BS broadcasts the signal to the selected $K_{B}$ users through ZF precoding. Let $\mathbf{s}\in \mathbb{C}^{K_{B}{\times} 1}$ be the Gaussian source data vector that has unit variance entries. The channel from the BS to the $k$-th user is denoted by $\mathbf{g}_k\in\mathbb{C}^{1{\times} M}$ and is modeled as $\mathbf{g}_k=\sqrt{\beta_k}\mathbf{h}_k$, where $\beta_k$ represents the large-scale fading, and $\mathbf{h}_k$ denotes the small-scale fading. The channel vector $\mathbf{g}_k$ is assumed to obey the frequency-flat independent identically distributed (i.i.d.) Rayleigh fading, thus we have $\mathbf{g}_k \sim \mathcal{CN}(\mathbf{0}, \beta_k \mathbf{I}_M)$. We define $\mathbf{G} \triangleq [\mathbf{g}_1^T, \cdots, \mathbf{g}_{K_B}^T]^T$, and define $\mathbf{H} \triangleq [\mathbf{h}_1^T, \cdots, \mathbf{h}_{K_B}^T]^T$, hence the channel matrix $\mathbf{G}$ can be expressed as $\mathbf{G}=\mathbf{B}^{1/2} \mathbf{H}$, where $\mathbf{B}$ is a diagonal matrix with $\{\beta_k\}$ on its diagonal.
Then, the precoded signal vector is given by
\vspace{-4pt}
\begin{equation}\label{precodedSignal}
\mathbf{x} = \mathbf{WDPs},
\vspace{-4pt}
\end{equation}
where $\mathbf{P}$ is a diagonal matrix representing the power allocation among the $K_{B}$ users with $\{  \sqrt{P_k} \}$ on its diagonal, and $\mathbf{P}$ satisfies the power constraint $\Tr(\mathbf{P}^2)=P$. The matrix product $\mathbf{WD}$ is the beamforming matrix, and $\mathbf{W}$ represents the pseudoinverse of the channel matrix, i.e., $\mathbf{W}=\mathbf{G}^H(\mathbf{GG}^H)^{-1}$, and $\mathbf{D}$ is a diagonal matrix with its $k$-th diagonal entry given by $ d_k= \frac{1}{\|\mathbf{w}_k\|} $, where $\mathbf{w}_k$ is the $k$-th column of $\mathbf{W}$. Hence, the matrix $\mathbf{D}$ keeps the power allocated to each user unchanged with and without beamforming \cite{Kim05EffeCG}.

Therefore, the received signal vector at the selected $K_{B}$ users is given by
\vspace{-4pt}
\begin{equation}\label{receSigWOsche}
\mathbf{y} = \mathbf{G x} + \mathbf{n} = \mathbf{DPs} + \mathbf{n},
\vspace{-4pt}
\end{equation}
where $\mathbf{n}$ is the additive white Gaussian noise (AWGN) vector with zero mean and variance of $\sigma_n^2$ for each entry.  From (\ref{receSigWOsche}), we get the effective channel gain for user $k$ as given by $d_k^2$.

Under max-min power control, we have the SNR $\xi_k=\xi_j$, $\forall k,j$, where $\xi_k=\frac{P_k d_k^2}{\sigma_n^2}$. Thus, the ergodic per user rate for one resource block is given by
\begin{equation}\label{ergoRate}
R = \mathbb{E}\left\{ \log_2\left( 1+ \frac{P}{\sigma_n^2} \frac{1}{ 1/d_1^2 + \cdots + 1/d_{K_{B}}^2 } \right) \right\}.
\end{equation}

\subsection{Round-robin Scheduler}
The RR scheduler \cite{SUSZF} first assigns all users to multiple groups and then serves all groups in consecutive resource blocks. In the channel magnitude based user grouping, one user's channel magnitude is denoted by the Euclidean norm of its channel vector, which is given by
\vspace{-4pt}
\begin{equation}\label{CMdef}
X_k = \big\|\mathbf{g}_k\big\|^2.
\vspace{-4pt}
\end{equation}
In one scheduling period, the scheduler first constructs one user group by selecting the $K_B$ users with the largest channel magnitudes from the $K$ users, then constructs the following user groups by repeating this way on the residual users. This user grouping method is equivalent to the SUS method \cite{SUSZF} when $M$ is sufficiently large, as the user channels are nearly orthogonal to each other in massive MIMO \cite{Ngo17TWC}.

We use random selection as a performance benchmark for comparison, where the scheduler randomly selects users from the candidate subset for each user group.

\vspace{-4pt}
\subsection{Channel Misreporting Model}
There are two channel magnitude misreporting models: \emph{underreporting} which helps the misreporting user to gain more power unfairly, and \emph{overreporting} which makes the misreporting user transfer some of its power to others. In either model, the users can be grouped in a wrong way which affects the throughput greatly.

In the following sections, we investigate the impact of channel misreporting on two scenarios with homogeneous users and heterogeneous users, respectively, through the evaluation of average rate loss (in percentage) for \true users.

In particular, in the homogeneous case, we assume $\mathbf{B}=\beta \mathbf{I}_K$, and the $K_M$ misreporting users misreport their channel magnitudes by a common scaling ratio, denoted by $\delta$. Without loss of generality, we denote by $\left\{ U_1, \cdots, U_{K_M} \right\}$ the misreporting users' set, by $\left\{ U_{K_M+1}, \cdots, U_{K} \right\}$ the \true users' set. Then the false CSI matrix collected by the BS is given by $\mathbf{F} = \mathbf{\Delta}^{1/2} \mathbf{G}$, where $\mathbf{\Delta}$ is a diagonal matrix with the first $K_M$ diagonal entries given by $\delta$ and the other $K{-}K_M$ diagonal entries given by $1$.

\vspace{-4pt}
\section{Impact of Channel Misreporting on Homogeneous Users}
\vspace{-2pt}
In this section, we quantify the effectiveness of channel magnitude misreporting on homogeneous users with the same average SNR. In this case,  users' channel magnitudes are close to each other due to the channel hardening of massive MIMO. Hence, the user grouping change has a trivial effect on the rate performance (\ref{ergoRate}) of each resource block. As a result, one misreporter can only harm the system by manipulating the power control in the resource block that selects him. Under max-min power control, the misreporting user claims more power by underreporting, hence reducing other users' rates, while it requests less power by overreporting, which benefits others. As we focus on the harm caused by misreporting, in this case, we only investigate the rate losses caused by underreporting.

First, we derive the expression of the rate loss per \true user by considering the case where $M\gg K$ such that all users can be served in one resource block. Then, we move the analysis to the case with arbitrary $K$, where the users are assigned to several resource blocks by RR scheduler.

\vspace{-4pt}
\subsection{All Users Served in One Resource Block}
The rate loss per user is taken in the sense of time average and it is defined as
\vspace{-4pt}
\begin{equation}\label{RLPUfirst}
\theta(K_M|K) = \frac{R^a - R^m}{R^a} = 1 - \frac{R^m}{R^a},
\vspace{-4pt}
\end{equation}
where $K_M|K$ indicates $K_M$ out of $K$ users are misreporters, $R^a$ denotes the ergodic per user rate with all users accurately reporting, and $R^m$ represents the ergodic per user rate regarding \true users with $K_M$ users underreporting.

\subsubsection{User Rate with Accurate Reporting}
It's challenging to derive the exact closed-form expression for the ergodic rate. Similar to \cite{Ngo13ToC}, applying Jensen's inequality, we can obtain a lower bound of the per user rate when $M>K$. This Jensen lower bound becomes exact as $M\rightarrow\infty$ \cite{JensenBoundTightness}.  Hence, the per user rate is expressed as
\vspace{-4pt}
\begin{equation}\label{ergoLowbound}
R^a \rightarrow  \log_2\left( 1 + \frac{P\beta}{\sigma_n^2}\frac{M-K}{K}  \right), ~ \mathrm{as} ~ M \rightarrow \infty.
\vspace{-4pt}
\end{equation}

\subsubsection{User Rate with Underreporting}
In underreporting, the scaling ratio $\delta{<}1$. At the BS side, the scheduler still perceives the false matrix $\mathbf{F}$ as true channels, and performs ZF precoding and power allocation based on  $\mathbf{F}$. The new max-min power control result is denoted by a diagonal matrix $\mathbf{\bar{P}}$. The received signals at the users believed by the BS is shown as
\begin{equation}\label{receSigByBS}
\mathbf{y}^{BS} = \mathbf{F} \cdot \mathbf{F}^H \big( \mathbf{FF}^H \big)^{-1} \mathbf{\bar{D}}  \cdot \mathbf{\bar{P}}\mathbf{s} + \mathbf{n} = \mathbf{\bar{D}}\mathbf{\bar{P}}\mathbf{s} + \mathbf{n},
\end{equation}
where $\mathbf{\bar{D}} =\diag \left\{ \bar{d}_1, \cdots, \bar{d}_K \right\}$. According to Theorem 1 in \cite{Kim05EffeCG}, we have
\begin{equation}\label{effeCGfalseCSI}
\bar{d}_k^2 = \left\| \mathbf{f}_k \mathbf{V}_k \right\|^2,
\end{equation}
where $\mathbf{f}_k$ represents the $k$-th row of the matrix $\mathbf{F}$, and $\mathbf{V}_k \in \mathbb{C}^{M\times (M{-}K{+}1)}$ denotes the orthonormal basis corresponding to the null space of the subspace spanned by the other $K-1$ channels. Note that $\mathbf{V}_k$ is invariant with respect to channel magnitude misreporting, since the channel directions are invariable. Therefore, we obtain
\begin{equation}\label{effeCGFCSIexpr}
\bar{d}_k^2 = \left\{
                \begin{array}{ll}
                  \delta d_k^2, & 1\leq k \leq K_M ; \\
                  d_k^2, & K_M+1 \leq k \leq K .
                \end{array}
              \right.
\end{equation}

The received signals of the users are expressed as
\vspace{-4pt}
\begin{equation}\label{receSigByUE}
\mathbf{y}^{UE} {=} \mathbf{G} {\cdot}  \mathbf{F}^H \big( \mathbf{FF}^H \big)^{-1} \mathbf{\bar{D}}  {\cdot} \mathbf{\bar{P}}\mathbf{s} {+} \mathbf{n}
{=} \big(  \mathbf{\Delta}^{1/2} \big)^{-1}\mathbf{\bar{D}}\mathbf{\bar{P}}\mathbf{s} {+} \mathbf{n}.
\vspace{-4pt}
\end{equation}

From (\ref{receSigByUE}), we observe that $\big(  \mathbf{\Delta}^{1/2} \big)^{-1}\mathbf{\bar{D}} {=} \mathbf{D}$, which means misreporting channel magnitudes does not change users' effective channel gains $\mathbf{D}$. From (\ref{receSigByBS}) and (\ref{receSigByUE}), we observe that for \true users, the SNRs derived by the BS is what they are really receiving, while for misreporting users, the SNR actually received is what the BS calculates multiplied by $1/\delta$.\footnote{Since the powers for misreporting users get larger while the real effective channel gains are invariable, the actual SNRs received by misreporting users are higher than in the case of normally reporting CSI.}

\begin{proposition}\label{pro1}
As $M\rightarrow \infty$,  the per user rate for \true users with $K_M$ users underreporting is given by
\begin{equation}\label{PURfCSIlowB}
R^m {\rightarrow} \log_2\left( 1 + \frac{P\beta}{\sigma_n^2} \frac{M-K}{ K-K_M + \frac{1}{\delta}K_M } \right).
\end{equation}
\end{proposition}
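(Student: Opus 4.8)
The plan is to track how the base-station's max-min power allocation, computed from the false matrix $\mathbf{F}$, translates into the SNR that \true users actually experience, and then to pass to the $M\to\infty$ limit using the same Jensen-bound argument that produced (\ref{ergoLowbound}).

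First I would pin down the power allocation $\mathbf{\bar{P}}$. Since the BS runs max-min power control on $\mathbf{F}$, it equalizes the SNRs it \emph{believes} each user sees, namely $\bar{P}_k\bar{d}_k^2/\sigma_n^2$; hence $\bar{P}_k\bar{d}_k^2 = c$ for a common constant $c$, and the power constraint $\sum_{k=1}^{K}\bar{P}_k = P$ fixes $c = P\big/\sum_{k=1}^{K}\bar{d}_k^{-2}$. Combining this with (\ref{effeCGFCSIexpr}) and with the observation below (\ref{receSigByUE}) that a \true user's true effective channel gain is unaffected ($\bar{d}_k^2 = d_k^2$ for $k > K_M$), the SNR actually received by any \true user $k$ is
\[
\xi_k \;=\; \frac{\bar{P}_k d_k^2}{\sigma_n^2} \;=\; \frac{c}{\sigma_n^2} \;=\; \frac{P}{\sigma_n^2\left(\tfrac{1}{\delta}\sum_{j=1}^{K_M}\tfrac{1}{d_j^2} + \sum_{j=K_M+1}^{K}\tfrac{1}{d_j^2}\right)},
\]
which is identical for all \true users, so $R^m = \mathbb{E}\{\log_2(1+\xi_k)\}$.

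Next, exactly as in (\ref{ergoLowbound}), I would apply Jensen's inequality to the convex map $y\mapsto\log_2(1+1/y)$, giving $R^m \ge \log_2\!\big(1 + 1/\mathbb{E}\{1/\xi_k\}\big)$, where $1/\xi_k$ is an affine combination of the $1/d_j^2$. The one distributional input needed is $\mathbb{E}\{1/d_j^2\}$: writing $\mathbf{G} = \sqrt{\beta}\,\mathbf{H}$ one has $d_j^2 = \beta\big/[(\mathbf{H}\mathbf{H}^H)^{-1}]_{jj}$, and for the $K\times M$ i.i.d.\ Rayleigh matrix $\mathbf{H}$ the diagonal entry of the inverse (complex) Wishart matrix has mean $1/(M-K)$, so $\mathbb{E}\{1/d_j^2\} = 1/\big(\beta(M-K)\big)$. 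Substituting yields $\mathbb{E}\{1/\xi_k\} = \frac{\sigma_n^2}{P\beta(M-K)}\big(K - K_M + \tfrac{1}{\delta}K_M\big)$, and the Jensen bound becomes precisely the right-hand side of (\ref{PURfCSIlowB}).

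Finally I would argue asymptotic tightness: by channel hardening each $1/d_j^2$ concentrates on its mean as $M\to\infty$, hence so does the finite sum $\sum_j 1/\bar{d}_j^2$, the argument of $\log_2$ converges to a deterministic constant, and Jensen's inequality becomes an equality in the limit---the same mechanism invoked for (\ref{ergoLowbound}) via \cite{JensenBoundTightness}. I expect the genuine obstacle to be the middle step: cleanly separating the ``believed'' per-user SNR that the BS uses to allocate power from the SNR the \true users actually receive, and checking via (\ref{effeCGFCSIexpr})--(\ref{receSigByUE}) that the misreporting perturbs only the $1/\bar{d}_j^2$ that enter the power-normalization sum and not the \true users' own effective gains; once that bookkeeping is right, the Wishart expectation and the limiting tightness are routine.
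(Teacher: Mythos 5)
Your proposal is correct and follows essentially the same route the paper intends: the paper's proof of Proposition~\ref{pro1} is simply the remark that it mirrors the derivation of (\ref{ergoLowbound}), i.e., write the \true user's actual SNR as $P$ over $\sigma_n^2$ times the sum $\frac{1}{\delta}\sum_{j\le K_M}1/d_j^2+\sum_{j>K_M}1/d_j^2$ induced by the max-min allocation on $\mathbf{F}$, apply Jensen's inequality with the inverse-Wishart mean $\mathbb{E}\{1/d_j^2\}=1/(\beta(M-K))$, and invoke asymptotic tightness of the bound as $M\to\infty$. Your bookkeeping of the believed versus actual SNRs via (\ref{effeCGFCSIexpr}) and (\ref{receSigByUE}) is exactly the intended use of those equations, so nothing further is needed.
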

\begin{proof}
The proof is similar to (\ref{ergoLowbound}).
\end{proof}

By substituting (\ref{ergoLowbound}) and (\ref{PURfCSIlowB}) into (\ref{RLPUfirst}), the rate loss when all users are served in one resource block as $M{\rightarrow} \infty$ is given by
\begin{equation}\label{RLPUwoUS}
\theta(K_M|K) {\rightarrow} 1 {-} \frac{\log_2\left( 1 {+} \frac{P\beta}{\sigma_n^2} \frac{M{-}K}{ K{-}K_M {+} \frac{K_M}{\delta} } \right)}{\log_2\left( 1 {+} \frac{P\beta}{\sigma_n^2}\frac{M{-}K}{K}  \right)}.
\end{equation}

Eq. (\ref{RLPUwoUS}) implies that the rate loss $\theta(K_M|K)$ increases with $K_M$, which is intuitive that more misreporting users impose a greater influence.

\textbf{High and Low SNR Regime.}
Through simple derivation, we get that the rate loss decreases with $P$ monotonically. In addition, when $\delta$ is very small, we have $K_M/\delta {\gg} K{-}K_M$. In high SNR regime ($\SNR{\rightarrow} \infty$), we have the rate loss $\theta(K_M|K) {\rightarrow} \log_2{\big(\frac{K_M}{\delta K} \big)} \big/ \log_2{\big( \frac{P\beta}{\sigma_n^2}\frac{M{-}K}{K}  \big)}$, where the numerator denotes the absolute rate loss. It is seen that the absolute rate loss is invariant with respect to the antenna number, the BS power, and the large-scale fading. However, the rate loss in percentage decreases with these parameters.
In low $\SNR$ regime ($\SNR{\rightarrow} {-\infty}$)
, we have $\theta(K_M|K) {\rightarrow} 1 - \frac{\delta K}{K_M}$  by using the taylor series expansion, which implies that the rate loss is very large when $\SNR$ is very small. For example, when $\delta{=}-20$dB, the rate loss for \true users is about 68\% as one out of 32 users underreports its channel.

\vspace{-4pt}
\subsection{Schedule Users by Round-robin Scheduler}
The RR scheduler serves all users once in one scheduling period.
For brevity, we only consider $K_M\leq K_{B}$ to unveil valuable insights of the impact and see the results with more misreporting users via simulations. All misreporting users underreports CSI using the same scaling-down ratio $\delta$. Besides, $\delta$ is assumed to be very small, in order to show the limit of the damage that CSI underreporting can bring to the system.

\textbf{Scheduling of the Misreporting Users.}
Based on RR scheduler with channel magnitude based user grouping, when $\delta$ is very small, the misreporting users are almost surely scheduled to the last resource block of each scheduling period.
The explanation is given below.

It is easily shown that the \true user's channel magnitude $X_k$ obeys the gamma distribution $\Gamma(M,\beta)$ with shape-scale parameters ($M,\beta$). For misreporting users, the reported channel magnitude is $X_i^m {=} \delta \| \mathbf{g}_i \|^2$, hence we have $X_i^m \sim \Gamma(M,\delta\beta)$. Then we can have the probability $\mathrm{Pr}(X_i^m < X_k) \rightarrow 1$, $\forall i,k$, as $\delta\rightarrow 0$, where the convergence follows from the fact that the mean $\mathbb{E}[X_i^m]=M\delta\beta \ll \mathbb{E}[X_k]$ and the variance $\mathbb{V}\mathrm{ar}[X_i^m]{=}M\delta^2\beta^2 {\ll} \mathbb{V}\mathrm{ar}[X_k]$, as $\delta\rightarrow 0$. Therefore, the misreporters are grouped together and scheduled to the last resource block, as their channel magnitudes are lower than those of \true users.

\textbf{Scheduling of the \True Users.}
Like the normal scenario, the RR scheduler groups the residual $K-K_M$ \true users, by exploiting their multi-user diversities.

The time-average rate loss for \true users based on RR scheduler with channel magnitude based user selection, is defined as
\begin{equation}\label{rlpuCM}
\theta^{CM} {=} \frac{ R^{a,CM}{\big/}T  {-}R^{m,CM}{\big/}T }{ R^{a,CM}{\big/}T } {=} \frac{ R^{a,CM}  {-}R^{m,CM} }{ R^{a,CM} },
\end{equation}
where $R^{a,CM}$ denotes the per user rate across one scheduling period with accurate CSI reporting, and $R^{m,CM}$ is the per user rate for \true users across one scheduling period with CSI misreporting. Denote by $R_t^{a,CM}$ and $R_t^{m,CM}$ the per user rate in terms of the $t$-th resource block with accurate reporting and misreporting, respectively. Then $R^{a,CM}$ and $R^{m,CM}$ are separately given by
\vspace{-4pt}
\begin{gather}
R^{a,CM} = \frac{1}{K} \sum_{t=1}^{T} K_{B} R_{t}^{a,CM}.  \label{PURscheTCSI}\\
R^{m,CM} {=} \frac{ (K_{B}{-}K_M)R_{T}^{m,CM} {+}  \sum_{t=1}^{T{-}1} K_{B} R_{t}^{m,CM} }{K-K_M}. \label{PURscheFCSI}
\vspace{-4pt}
\end{gather}

Similar to (\ref{PURscheTCSI}), regarding the random user grouping under accurate CSI reporting, define $R^{a,rand}$ and $R_t^{a,rand}$ as the per user rate for $K$ users across one scheduling period and per user rate with respect to the $t$-th resource block, respectively, In massive MIMO, we have the following proposition.

\begin{proposition}\label{massiveMprop}
As $M\rightarrow \infty$, the following assertions are true:

1. Random user grouping performs almost the same as the channel magnitude based user grouping under accurate CSI reporting, i.e.,
\begin{equation}
R_{t}^{a,CM} - R_{t}^{a,rand} \rightarrow 0, ~~ 1 \leq t \leq T. \label{CMrandTindt}
\end{equation}
And it is easily shown that $R^{a,rand} = R_{t}^{a,rand}$, $\forall t$.

2. With $K_M$ ($K_M{<}K_B$) users underreporting, for the channel magnitude based user grouping, the rate losses in the preceding $T{-}1$ resource blocks are very small and can be ignored, i.e.,
\begin{equation}\label{CMTorFrates}
R_{t}^{m,CM} - R_{t}^{a,CM} \rightarrow 0, ~~ 1 \leq t \leq T-1.
\end{equation}
\end{proposition}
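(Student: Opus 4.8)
The plan is to reduce both assertions to a single observation: in the homogeneous case, as $M\to\infty$ the ergodic per-resource-block rate of \emph{any} group of $K_B$ honest users converges to the common value $\log_2\!\big(1+\tfrac{P\beta}{\sigma_n^2}\tfrac{M-K_B}{K_B}\big)+o(1)$, irrespective of how the group was formed. The lever is the decomposition of the effective channel gain, which by Theorem~1 of \cite{Kim05EffeCG} equals $d_k^2=\|\mathbf g_k\mathbf V_k\|^2=\|\mathbf g_k\|^2\,\|\hat{\mathbf g}_k\mathbf V_k\|^2$ with $\hat{\mathbf g}_k=\mathbf g_k/\|\mathbf g_k\|$. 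In i.i.d.\ Rayleigh fading the direction $\hat{\mathbf g}_k$ is uniform on the unit sphere and independent of the magnitude $\|\mathbf g_k\|^2\sim\Gamma(M,\beta)$, and $\mathbf V_k$ — an orthonormal basis of the null space of the span of the other $K_B-1$ channels in the block — depends only on those users' \emph{directions}. Hence the second factor $\|\hat{\mathbf g}_k\mathbf V_k\|^2$ has a law (Beta$(M-K_B+1,K_B-1)$) that is completely insensitive to the grouping rule and tends to $1$ in probability. The first factor is sensitive to the grouping rule, but by channel hardening \emph{all} $K$ order statistics of $\{\|\mathbf g_k\|^2\}$ satisfy $\|\mathbf g_k\|^2/M\to\beta$ (their spread is $O(\sqrt M)$, negligible against the mean $M\beta$), so $d_k^2/M\to\beta$ in probability for every user in every block under either grouping. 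Passing this through (\ref{ergoRate}) — exactly as in the derivation of (\ref{ergoLowbound}), via the Jensen bound and its asymptotic tightness \cite{JensenBoundTightness} — yields the common limit.

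For assertion 1, under random grouping the $K$ users are partitioned uniformly at random into $T$ groups of size $K_B$; by exchangeability each group has the same marginal law, so $R_t^{a,rand}$ does not depend on $t$, and then $R^{a,rand}=\tfrac1K\sum_{t=1}^T K_B R_t^{a,rand}=R_t^{a,rand}$ since $K=TK_B$ — an exact identity, no asymptotics. Under channel-magnitude-based grouping, block $t$ is the $t$-th batch of order statistics of $\{\|\mathbf g_k\|^2\}$; by the reduction above $R_t^{a,CM}$ converges to the same limit as $R_t^{a,rand}$, so $R_t^{a,CM}-R_t^{a,rand}\to0$ for every $1\le t\le T$ (including $t=T$, since even the smallest order statistics concentrate at $M\beta$).

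For assertion 2, I would first invoke the placement of the misreporters established just above (``Scheduling of the Misreporting Users''): for fixed $\delta<1$, $\mathrm{Pr}(X_i^m<X_k)\to1$ as $M\to\infty$, so with probability tending to one the $K_M$ misreporters are all grouped into the last block, and since $K_M<K_B$ the blocks $t=1,\dots,T-1$ then contain only honest users. In those blocks the BS operates on the true channels, so max-min power control and ZF precoding are unaffected by the misreporting and $R_t^{m,CM}$ is once more the rate (\ref{ergoRate}) of a homogeneous $K_B$-user group — merely a different honest subset than in the accurate-reporting scenario. By the reduction, both $R_t^{m,CM}$ and $R_t^{a,CM}$ converge to $\log_2\!\big(1+\tfrac{P\beta}{\sigma_n^2}\tfrac{M-K_B}{K_B}\big)$, hence $R_t^{m,CM}-R_t^{a,CM}\to0$ for $1\le t\le T-1$. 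The main obstacle is the reduction step itself — making rigorous that conditioning on the channel-magnitude rank is asymptotically innocuous — which is precisely where the massive-antenna hypothesis enters, through the concentration of all order statistics of $\Gamma(M,\beta)$ combined with the independence of magnitude and direction; the remaining ingredients (exchangeability under random partitioning, the Jensen-bound machinery behind (\ref{ergoLowbound}), and the already-proven placement of the misreporters) are either cited or immediate.
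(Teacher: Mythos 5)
Your proposal is correct and follows essentially the same route as the paper, which justifies Proposition~\ref{massiveMprop} only through the accompanying Remark: channel hardening makes the multi-user diversity gain of magnitude-based grouping over random grouping vanish, and favorable propagation makes the ZF effective gains insensitive to which honest users share a block. Your magnitude--direction factorization $d_k^2=\|\mathbf g_k\|^2\|\hat{\mathbf g}_k\mathbf V_k\|^2$, the concentration of all order statistics of $\Gamma(M,\beta)$ at $M\beta$, and the reuse of the misreporter-placement argument are exactly the technical content the paper's remark gestures at (and that its Proposition~3 machinery relies on), so this is a faithful, somewhat more rigorous rendering of the paper's own argument rather than a different one.
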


\begin{remark}
The convergences in Proposition \ref{massiveMprop} are essentially due to the favorable propagation and channel hardening of massive MIMO \cite{Ngo17TWC}.  Thus the benefit of channel magnitude based user selection from multi-user diversity \cite{cLi17TWC} over random selection is trivial, hence we have (\ref{CMrandTindt}). Besides, in underreporting, the rate losses in the preceding $T{-}1$ resource blocks come from the multi-user diversity loss, since the \true users' number decreases to $K{-}K_M$. However, this loss is also negligible due to channel hardening. Thus we have (\ref{CMTorFrates}). Note that this proposition is not valid for heterogeneous users.
\end{remark}

Then, we can derive the rate loss of channel magnitude based user grouping as shown in the following proposition.

\begin{proposition}\label{propoRLPUCM}
As $M{\rightarrow}\infty$,
\begin{equation}\label{RLPUscheCM}
\theta^{CM}
{\rightarrow} -\frac{K_M(T{-}1)}{T(K{-}K_M)} {+} \frac{ R_{T}^{a,CM} }{ T R^{a,rand} } {-} \frac{ (K_{B}{-}K_M)R_{T}^{m,CM} }{(K{-}K_M)R^{a,rand}},
\end{equation}
in which the asymptotic expressions of $R^{a,rand}$, $R_{T}^{a,CM}$ and $R_{T}^{m,CM}$ are given by, respectively
\begin{gather}
R^{a,rand}  \rightarrow  \log_2\left( 1 + \frac{P\beta}{\sigma_n^2} \frac{ M-K_{B} }{ K_{B} } \right), ~ \mathrm{as} ~ M\rightarrow\infty, \label{PURTCSIrand} \\
R_{T}^{a,CM} {\rightarrow} \log_2\left( 1 + \frac{P}{\sigma_n^2} \frac{ M-K_{B} }{ (M-1) A_{T}^a }  \right), ~ \mathrm{as} ~ M\rightarrow\infty, \label{RLPUlastTSlowBTCSI} \\
R_{T}^{m,CM} {\rightarrow}   \log_2\left( 1 + \frac{P}{\sigma_n^2} \frac{ M-K_{B} }{ (M-1) A_{T}^m }  \right), ~ \mathrm{as} ~ M\rightarrow\infty, \label{RLPUlastTSlowBFCSI}
\end{gather}
where $A_{T}^a = \sum_{k=1}^{K_{B}} \int_0^{+\infty} \frac{ f_{(k)}(x) }{x} dx $, and $A_{T}^m = \frac{K_M}{\delta\beta(M-1)} + \sum_{k=1}^{K_{B} -K_M } \int_0^{+\infty} \frac{ \bar{f}_{(k)}(x) }{x} dx $, and $f_{(k)}(x)$ and $\bar{f}_{(k)}(x)$ represent the probability density functions (PDFs) of the $k$-th smallest order statistic of the order statistics for a sample of size $K$ and $K-K_M$, respectively, both from the gamma distribution $\Gamma(M,\beta)$ with shape-scale parameters $(M,\beta)$.
\end{proposition}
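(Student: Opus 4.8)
The plan is to separate (\ref{RLPUscheCM}) --- a pure bookkeeping identity --- from the three asymptotic rate formulas (\ref{PURTCSIrand})--(\ref{RLPUlastTSlowBFCSI}), each of which I would obtain by writing the relevant per-block rate as the Jensen lower bound of (\ref{ergoRate}) (which is tight as $M\to\infty$ by \cite{JensenBoundTightness}) and then evaluating a single expectation of the form $\mathbb{E}\{\sum_k 1/\bar{d}_k^2\}$. For the identity, I would substitute (\ref{PURscheTCSI}) and (\ref{PURscheFCSI}) into $\theta^{CM}=1-R^{m,CM}/R^{a,CM}$. Proposition \ref{massiveMprop} makes $R_t^{a,CM}\to R^{a,rand}$ for every $t$ and $R_t^{m,CM}\to R_t^{a,CM}$ for $t\le T-1$, so the leading $T-1$ blocks of both numerator quantities collapse to $K_B R^{a,rand}$ each, and in particular $R^{a,CM}\to R^{a,rand}$, which I use in the denominator. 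Keeping $R_T^{a,CM}$ and $R_T^{m,CM}$ distinct, $R^{a,CM}-R^{m,CM}\to \frac1T R_T^{a,CM}-\frac{K_B-K_M}{K-K_M}R_T^{m,CM}+\big(\frac{T-1}{T}-\frac{(T-1)K_B}{K-K_M}\big)R^{a,rand}$, and since $K=TK_B$ the last coefficient equals $-\frac{K_M(T-1)}{T(K-K_M)}$; dividing by $R^{a,rand}$ gives (\ref{RLPUscheCM}). This step is routine given Proposition \ref{massiveMprop}.

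For the effective gains I would write $\mathbf{g}_k=\sqrt{X_k}\,\mathbf{u}_k$ with $X_k=\|\mathbf{g}_k\|^2$ and $\mathbf{u}_k$ isotropic on the unit sphere of $\mathbb{C}^M$ and independent of $X_k$. In any $K_B$-user block the ZF gain is $d_k^2=\|\mathbf{g}_k\mathbf{V}_k\|^2=X_k\|\mathbf{u}_k\mathbf{V}_k\|^2$, where $\mathbf{V}_k$ (an orthonormal basis of the $(M{-}K_B{+}1)$-dimensional null space of the other $K_B{-}1$ channels) depends only on those users' \emph{directions}; hence $\mathbf{V}_k$ stays independent of $(\mathbf{u}_k,X_k)$ even under magnitude-based grouping, $\|\mathbf{u}_k\mathbf{V}_k\|^2\sim\mathrm{Beta}(M{-}K_B{+}1,K_B{-}1)$, and $\mathbb{E}\{1/d_k^2\mid X_k\}=\frac{M-1}{M-K_B}\frac1{X_k}$. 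For random grouping there is no conditioning: $d_k^2\sim\beta\,\Gamma(M{-}K_B{+}1,1)$, $\mathbb{E}\{1/d_k^2\}=\frac1{\beta(M-K_B)}$, and the Jensen bound reproduces (\ref{PURTCSIrand}) (this is (\ref{ergoLowbound}) with $K_B$ in place of $K$). For the $T$-th block under accurate CSI the $K_B$ users carry the $K_B$ smallest magnitudes of a size-$K$ i.i.d.\ $\Gamma(M,\beta)$ sample, so $\mathbb{E}\{\sum_{k=1}^{K_B}1/d_k^2\}=\frac{M-1}{M-K_B}\sum_{k=1}^{K_B}\int_0^\infty f_{(k)}(x)/x\,dx=\frac{M-1}{M-K_B}A_T^a$, and the Jensen bound yields (\ref{RLPUlastTSlowBTCSI}).

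For the misreported last block, the scheduling argument preceding the proposition puts the $K_M$ misreporters and the $K_B{-}K_M$ smallest-magnitude \true users there. The BS equalizes $\bar{P}_k\bar{d}_k^2$ under max-min control, and since (\ref{effeCGFCSIexpr}) gives $\bar{d}_k^2=d_k^2$ for \true users and $\bar{d}_k^2=\delta d_k^2$ for misreporters, an \true user's actually received SNR is $\bar{P}_k d_k^2/\sigma_n^2=\bar{P}_k\bar{d}_k^2/\sigma_n^2=(P/\sigma_n^2)\big/\sum_j 1/\bar{d}_j^2$, so $R_T^{m,CM}=\mathbb{E}\{\log_2(1+(P/\sigma_n^2)/\sum_j 1/\bar{d}_j^2)\}$. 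Splitting $\sum_j 1/\bar{d}_j^2=\frac1\delta\sum_{i=1}^{K_M}1/d_{U_i}^2+\sum_{j=1}^{K_B-K_M}1/d_{(j)}^2$, the misreporters' true magnitudes are \emph{unordered} $\Gamma(M,\beta)$ (so $\mathbb{E}\{1/d_{U_i}^2\}=\frac1{\beta(M-K_B)}$) while the \true summands use the size-$(K{-}K_M)$ order-statistic densities $\bar{f}_{(k)}$; collecting terms, $\mathbb{E}\{\sum_j 1/\bar{d}_j^2\}=\frac{M-1}{M-K_B}\big(\frac{K_M}{\delta\beta(M-1)}+\sum_{k=1}^{K_B-K_M}\int_0^\infty\bar{f}_{(k)}(x)/x\,dx\big)=\frac{M-1}{M-K_B}A_T^m$, and the Jensen bound yields (\ref{RLPUlastTSlowBFCSI}).

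The main obstacle is the exact conditional identity $\mathbb{E}\{1/d_k^2\mid X_k\}=\frac{M-1}{(M-K_B)X_k}$: it requires the independence of $\mathbf{V}_k$ from $(\mathbf{u}_k,X_k)$ to survive the conditioning induced by magnitude-based grouping, together with the Beta law of the squared projection of an isotropic unit vector onto an independent $(M{-}K_B{+}1)$-plane. These furnish the $\frac{M-1}{M-K_B}$ prefactors that reconcile (\ref{RLPUlastTSlowBTCSI})--(\ref{RLPUlastTSlowBFCSI}) with (\ref{PURTCSIrand}) and (\ref{ergoLowbound}). What remains is to verify that $\sum_j 1/\bar{d}_j^2$ concentrates (vanishing relative fluctuation in $M$) so that the Jensen lower bounds are asymptotically exact, which is precisely the regime invoked through \cite{JensenBoundTightness}.
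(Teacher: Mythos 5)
Your proposal is correct and follows essentially the same route the paper takes: substitute (\ref{PURscheTCSI})--(\ref{PURscheFCSI}) into (\ref{rlpuCM}) and collapse the first $T-1$ blocks via Proposition \ref{massiveMprop} to get the bookkeeping identity, then evaluate each last-block rate with the asymptotically tight Jensen bound, the $\mathrm{Beta}(M{-}K_B{+}1,K_B{-}1)$ law of the ZF projection loss (giving the $\frac{M-1}{M-K_B}$ prefactor, exactly as in the cited effective-channel-gain theorem), and the order-statistic densities $f_{(k)}$, $\bar{f}_{(k)}$ for the magnitude-based grouping, with the misreporters' unordered $\Gamma(M,\beta)$ magnitudes and the $1/\delta$ power inflation producing the $\frac{K_M}{\delta\beta(M-1)}$ term in $A_T^m$. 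The algebra checks out (in particular the coefficient $-\frac{K_M(T-1)}{T(K-K_M)}$ from $K=TK_B$), and your explicit justification that $\mathbf{V}_k$ remains independent of $(\mathbf{u}_k,X_k)$ under magnitude-based selection is precisely the point the argument needs.
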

\begin{proof}
Please see Appendix-A in \cite{supplementFile}.
\end{proof}

However, it's difficult to observe valuable insights from (\ref{RLPUscheCM}). Next, we provide an upper bound of this rate loss.

\begin{corollary}\label{upperBoundOfRateLoss}
The upper bound of the rate loss per honest user with $K_M$ users underreporting is given by
\begin{equation}\label{thetaCMupperB}
\theta_{upper}^{CM} = \frac{ K_{B} - K_M }{ K - K_M } \theta(K_M|K_{B}), ~~ K_M \leq K_{B},
\end{equation}
where $\theta(K_M|K_{B})$ is determined by (\ref{RLPUwoUS}).
\end{corollary}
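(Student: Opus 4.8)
The plan is to use Proposition~\ref{massiveMprop} to confine the rate loss to the single ``infected'' resource block (the $T$-th one), collapse the period-averaged loss $\theta^{CM}$ to the relative loss \emph{inside} that block, and then bound the latter by the one-shot loss $\theta(K_M|K_{B})$ of Section~III-A.

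First I would rewrite $R^{m,CM}$ from (\ref{PURscheFCSI}) using Proposition~\ref{massiveMprop}: since $R_{t}^{m,CM}\to R_{t}^{a,CM}$ for $t\le T-1$ as $M\to\infty$, and $\sum_{t=1}^{T-1}K_{B}R_{t}^{a,CM}=KR^{a,CM}-K_{B}R_{T}^{a,CM}$ by (\ref{PURscheTCSI}), this yields
\[
\theta^{CM}=\frac{K_{B}\big(R_{T}^{a,CM}-R_{T}^{m,CM}\big)-K_{M}\big(R^{a,CM}-R_{T}^{m,CM}\big)}{(K-K_{M})\,R^{a,CM}}.
\]
Next, block $T$ holds the $K_{B}$ users of smallest channel magnitude, and the per-block rate (\ref{ergoRate}) is monotone in those magnitudes (the direction factors $\mathbf{V}_{k}$ being magnitude-invariant), so stochastic ordering of the order statistics gives $R_{T}^{a,CM}=\min_{t}R_{t}^{a,CM}\le\frac1T\sum_{t}R_{t}^{a,CM}=R^{a,CM}$. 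Writing $a:=R_{T}^{a,CM}$, $m:=R_{T}^{m,CM}$, $A:=R^{a,CM}$ with $A\ge a\ge m>0$, the numerator $N:=K_{B}(a-m)-K_{M}(A-m)$ satisfies $A(K_{B}-K_{M})(a-m)-Na=(A-a)\big[K_{B}(a-m)+K_{M}m\big]\ge0$, so
\[
\theta^{CM}=\frac{N}{(K-K_{M})A}\le\frac{K_{B}-K_{M}}{K-K_{M}}\Big(1-\frac{R_{T}^{m,CM}}{R_{T}^{a,CM}}\Big).
\]

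It then remains to prove that the relative loss inside block $T$ is at most $\theta(K_M|K_{B})$; this is the crux. I would use the Jensen-bound forms (\ref{RLPUlastTSlowBTCSI})--(\ref{RLPUlastTSlowBFCSI}) of Proposition~\ref{propoRLPUCM}, under which the comparison reduces to inequalities between $A_{T}^{a}$, $A_{T}^{m}$ and their fresh-sample counterparts $K_{B}/(\beta(M{-}1))$ and $(K_{B}{-}K_{M}{+}K_{M}/\delta)/(\beta(M{-}1))$. Two observations close it: (i) the mean of the $j$ largest among $n$ i.i.d.\ copies of $1/X_{k}$ is at least $\mathbb{E}[1/X_{k}]=1/(\beta(M{-}1))$, hence $A_{T}^{a}\ge K_{B}/(\beta(M{-}1))$, i.e.\ the block-$T$ accurate baseline $R_{T}^{a,CM}$ is \emph{below} the fresh-sample rate $R^{a,rand}$ (which equals the denominator of $\theta(K_M|K_{B})$ by (\ref{PURTCSIrand})); and (ii) since $\delta$ is very small, the term $K_{M}/(\delta\beta(M{-}1))$ dominates $A_{T}^{m}$, so to leading order $R_{T}^{m,CM}=\log_2\!\big(1+\tfrac{P\beta}{\sigma_n^2}\tfrac{(M-K_{B})\delta}{K_{M}}\big)$, independent of which users sit in block $T$ -- exactly the magnitude-invariance of the absolute loss noted after (\ref{RLPUwoUS}) -- so $R_{T}^{m,CM}$ matches the numerator of $\theta(K_M|K_{B})$ up to $O(\delta)$. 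Combining (i) and (ii), $R_{T}^{m,CM}/R_{T}^{a,CM}\ge(\text{num})/(\text{den})$, i.e.\ $1-R_{T}^{m,CM}/R_{T}^{a,CM}\le\theta(K_M|K_{B})$, and with the previous display $\theta^{CM}\le\frac{K_{B}-K_{M}}{K-K_{M}}\theta(K_M|K_{B})=\theta^{CM}_{upper}$.

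The main obstacle is this last step: the infected block mixes order statistics (the honest users, and the accurate baseline, use the smallest magnitudes among $K$, resp.\ $K{-}K_{M}$, draws) with ordinary i.i.d.\ copies (the misreporters' true magnitudes), so one cannot simply plug $K=K_{B}$ into (\ref{RLPUwoUS}); the argument must go through the closed forms of Proposition~\ref{propoRLPUCM} and leans on $\delta$ being very small. It is worth recording that channel hardening drives $A_{T}^{a}\to K_{B}/(\beta(M{-}1))$ and $A_{T}^{m}\to(K_{B}{-}K_{M}{+}K_{M}/\delta)/(\beta(M{-}1))$ as $M\to\infty$, whence $R^{a,CM},R_{T}^{a,CM}\to R^{a,rand}$ and every inequality above closes to equality, so the bound is asymptotically tight.
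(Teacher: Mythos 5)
Your reduction to the infected block is sound and is almost certainly the intended skeleton: using Proposition~\ref{massiveMprop} to collapse (\ref{PURscheFCSI}), and the identity $A(K_{B}-K_{M})(a-m)-Na=(A-a)\bigl[K_{B}(a-m)+K_{M}m\bigr]\ge 0$ together with $R_{T}^{a,CM}\le R^{a,CM}$ (the last block holds the smallest order statistics), correctly yield $\theta^{CM}\le\frac{K_{B}-K_{M}}{K-K_{M}}\bigl(1-R_{T}^{m,CM}/R_{T}^{a,CM}\bigr)$. The gap is in the closing step. You need $R_{T}^{m,CM}/R_{T}^{a,CM}\ge R^{m}_{\mathrm{fresh}}/R^{a,rand}$, where $R^{m}_{\mathrm{fresh}}=\log_2\bigl(1+\tfrac{P\beta}{\sigma_n^2}\tfrac{M-K_{B}}{K_{B}-K_{M}+K_{M}/\delta}\bigr)$ is the fresh-sample rate appearing in $\theta(K_M|K_{B})$, and you obtain it by chaining the exact inequality $R_{T}^{a,CM}\le R^{a,rand}$ with the approximation $R_{T}^{m,CM}\approx R^{m}_{\mathrm{fresh}}$ ``up to $O(\delta)$''. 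But that approximation errs in the unfavorable direction: since $\sum_{k=1}^{K_{B}-K_{M}}\int_0^{\infty}\bar f_{(k)}(x)/x\,dx\ge (K_{B}-K_{M})/(\beta(M-1))$, one has $A_{T}^{m}$ at least its fresh-sample counterpart and hence $R_{T}^{m,CM}\le R^{m}_{\mathrm{fresh}}$ strictly for finite $M$. An exact inequality cannot be concluded from ``$\ge$ combined with $\approx$'' when the error has the wrong sign; as written, the crux step --- which you yourself flag as the main obstacle --- is asserted rather than proved.

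To close it you must compare \emph{relative} deficits: since $\frac{x-\epsilon_{1}}{y-\epsilon_{2}}\ge\frac{x}{y}$ exactly when $\epsilon_{1}/x\le\epsilon_{2}/y$, it suffices to show $\bigl(R^{m}_{\mathrm{fresh}}-R_{T}^{m,CM}\bigr)/R^{m}_{\mathrm{fresh}}\le\bigl(R^{a,rand}-R_{T}^{a,CM}\bigr)/R^{a,rand}$. This is where the smallness of $\delta$ enters quantitatively rather than rhetorically: writing $A_{T}^{a}$ and $A_{T}^{m}$ as their fresh-sample values plus order-statistic excesses $E^{a},E^{m}\ge 0$, the excess $E^{m}$ is swamped inside $A_{T}^{m}$ by the term $K_{M}/(\delta\beta(M-1))$, so the relative perturbation of $R_{T}^{m,CM}$ is smaller by a factor of order $\delta$ than that of $R_{T}^{a,CM}$, whose excess $E^{a}$ enjoys no such damping. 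Making that comparison explicit via the closed forms (\ref{RLPUlastTSlowBTCSI})--(\ref{RLPUlastTSlowBFCSI}) (a first-order expansion in $E^{a},E^{m}$ suffices) is the missing content; without it the chain of inequalities does not establish the bound, and indeed it could fail for moderate $\delta$. The rest of your argument, including the tightness remark as $M\rightarrow\infty$, is correct.
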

\begin{proof}
Please see Appendix-B in \cite{supplementFile}.
\end{proof}

Moreover, when $K_M = 1$, it is easily shown that $\theta_{upper}^{CM}=\theta^{rand}$, where $\theta^{rand}$ represents the rate loss per \true user for the RR scheduler that uses random user grouping.

\begin{figure}
\setlength{\abovecaptionskip}{-1mm}
\centering
\includegraphics[width=2.7in]{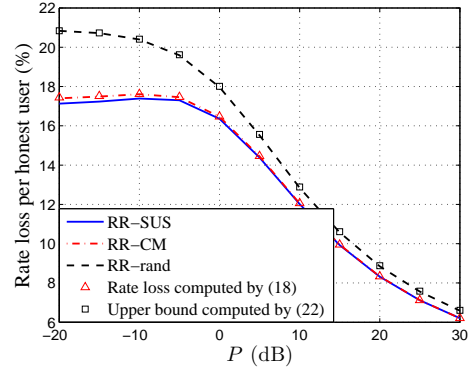}
\caption{Rate loss per \true user vs. $P$ with one underreporting user.}\label{RLPUoneAttaHomodiffP}
\vspace{-7mm}
\end{figure}

From (\ref{thetaCMupperB}), we see that the rate loss is zero when $K_M{=}K_{B}$, indicating no harm with $K_{B}$ users' underreporting. The reason is that $K_B$ misreporting users occupy the last resource block and no \true user is grouped together with any of them. In addition, (\ref{thetaCMupperB}) shows that the impact of misreporters on the channel magnitude based user grouping is partly alleviated through the coefficient $\frac{K_{B} -K_M }{ K-K_M }$, as the misreporters are grouped together in the last resource block and discontinuously selected by the scheduler. 

\vspace{-6pt}

\subsection{Performance Evaluation}\label{numeResHom}
We set $M{=}64$ for massive MIMO implementation. The noise variance and the large-scale fading are set to be $\sigma_n^2{=}1$ and $\beta=1$, respectively. We set $K{=}32$, $K_{B}{=}8$. The underreporting ratio $\delta$ is set to be $\delta{=}-20$dB.

Fig. \ref{RLPUoneAttaHomodiffP} depicts the rate loss for \true users versus the power $P$ at the BS, considering one misreporting user. The legend ``RR-\{SUS, CM, rand\}'' indicates the simulation results of RR scheduler with SUS, channel magnitude based and random user grouping, respectively. Besides, the analytic results of the rate loss in (\ref{RLPUscheCM}) and the upper bound in (\ref{thetaCMupperB}) are also plotted. This figure proves the good match between the analytic and simulated results, and verifies that (\ref{thetaCMupperB}) represents the result for RR scheduler with random selection when there is one misreporting user. In addition, it is observed that with massive antennas, the SUS algorithm can be well approximated by the channel magnitude based user selection, and the \true users in the case with random selection suffer higher losses. Moreover, Fig. \ref{RLPUoneAttaHomodiffP} verifies that the rate loss is large when the power is small and decays as the power increases.

In Fig. \ref{rlpuVsKaHomodifKa}, for the channel magnitude based RR scheduler, the analytic result (\ref{RLPUscheCM}) and its upper bound (\ref{thetaCMupperB}) are verified to be tight when the system has multiple misreporters. Besides, we observe an interesting phenomenon that the rate loss shows the periodic increase and decrease as $K_M$ grows, with $K_B$ as the period. In contrast, the random user selection is verified to be much more sensitive to more users' underreporting.

\begin{figure}
\setlength{\abovecaptionskip}{-1mm}
\centering
\includegraphics[width=2.7in]{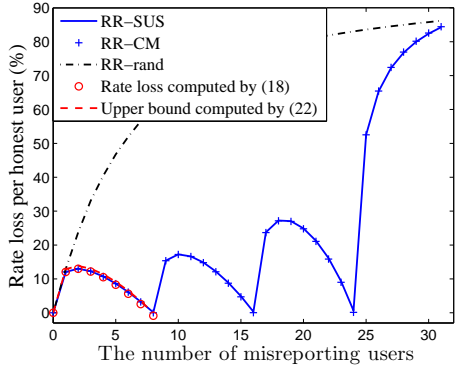}
\caption{Rate loss per \true user vs. $K_M$, $P=10$dB.}\label{rlpuVsKaHomodifKa}
\vspace{-7mm}
\end{figure}

\textbf{Periodicity Analysis.}
The ``periodic'' phenomenon of the rate loss function in Fig. \ref{rlpuVsKaHomodifKa} can be understood by analyzing (\ref{thetaCMupperB}).
In (\ref{thetaCMupperB}), when $K_M {\leq} K_{B}$, we know from (\ref{RLPUwoUS}) that $\theta(K_M|K_{B})$ increases with $K_M$, while it is easily shown that $\frac{K_{B} -K_M }{ K-K_M }$ decreases with $K_M$. Therefore, it is inferred that $\theta_{upper}^{CM}$ first increases with $K_M$ when the growing speed of $\theta(K_M|K_{B})$ is larger than the descending speed of $\frac{K_{B} -K_M }{ K-K_M }$, and then decreases with $K_M$ due to the opposite reason. In addition, from (\ref{thetaCMupperB}), we can predict the tendency of the rate loss when $K_M$ is larger than $K_{B}$. For example, when $K_{B}{\leq} K_M {\leq} 2K_{B}$, we will lose the last resource block that is full of misreporters. And the problem will reduce to that using $K_M{-}K_{B}$ misreporting users out of $K{-}K_{B}$ total users, and hence the similar result like (\ref{thetaCMupperB}) will be acquired. As a result, the rate loss per \true user will first increase and then decrease to zero, periodically with $K_M$, and the period is equal to $K_{B}$. And when $K_M{\geq} (T-1)K_{B}$, the rate loss will perform like $\theta(K_M|K_{B})$.

\vspace{-3pt}

\section{Impact of Channel Misreporting on Heterogeneous Users}
In this section, we consider the heterogeneous users case with various average SNRs, which is more practical in the real world. Due to the advantage of channel hardening, the scheduler in this case can depend only on the large-scale fading in time domain \cite{Ngo17TWC}. For simplicity, we mainly analyze the scheduler depending on the large-scale fading coefficients, and compare via simulations with the scheduler based on the channel magnitudes that vary once the small-scale fading changes. In addition, we assume the worst scenario that the misreporters have \true users' average SNRs, and only modify their own large-scale fading. In this scenario, our objective is to find out the most threatening misreporting strategies.

The users are relabeled based on their large-scale fading so that we have $\beta_1{>}\beta_2{>} \cdots {>} \beta_K$. Considering one RR scheduling period, one example of the scheduling result with accurate CSI reporting is shown in Fig. \ref{userGroupingChange}-(a). For the $t$-th resource block, similar to (\ref{ergoLowbound}), the asymptotic expression of the ergodic per user rate under max-min power control as $M\rightarrow \infty$ is given by
\vspace{0pt}
\begin{equation}\label{PURLSKzfTCSI}
R_{LS,t}^a {\rightarrow} \log_2{ \Big( 1 {+} \frac{P}{\sigma_n^2} \frac{M{-}K_{B}}{ 1{/}\beta_{t,1} {+} {\cdots} {+} 1{/}\beta_{t,K_{B}} }  \Big) }, ~ 1{\leq} t {\leq} T.
\vspace{0pt}
\end{equation}
where $\beta_{t,k}=\beta_{(t-1)K_B+k}$. In (\ref{PURLSKzfTCSI}), it is seen that one resource block has different rate performance by selecting different users, as their large-scale fading coefficients differ greatly.

\begin{figure}
\setlength{\abovecaptionskip}{-1mm}
  \centering
  \includegraphics[width=3.4in]{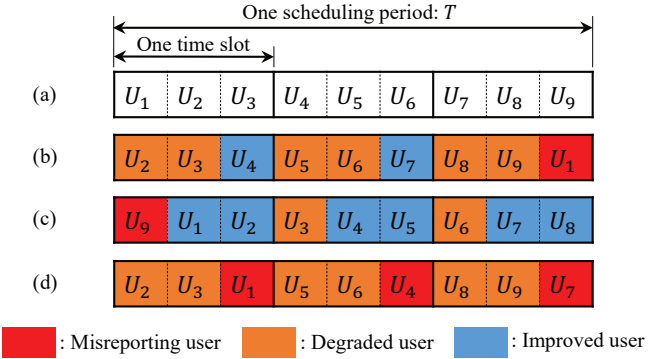}
  \caption{Scheduling results in the heterogeneous case with and without misreporting, respectively. $K=9$, $K_B=3$. (a) With accurate CSI reporting. (b) In underreporting, $U_1$ underreports by $\beta_1^m<\beta_9$. (c) In overreporting, $U_9$ overreports by $\beta_9^m>\beta_1$. (d) In user grouping unchanged underreporting, $U_1$ underreports by $\beta_4$, $U_4$ underreports by $\beta_7$, and $U_7$ underreports by $\beta_7^m<\beta_9$.}\label{userGroupingChange}
  \vspace{-7mm}
\end{figure}
\textbf{How and How Much Should the Underreporting Be?}
From (\ref{PURLSKzfTCSI}), we see that for one resource block, underreporting the best user's large-scale fading to be lower than the worst user's can reduce the rate to the maximum extent. However, under RR scheduler, the user grouping result will change if the underreporting is too low: the misreporter will be delayed to one future resource block of the same scheduling period and replaced by one user in the next resource block, like Fig. \ref{userGroupingChange}-(b) shows where $U_1$ underreports by $\beta_1^m<\beta_9$. Indeed, letting the best user $U_1$ underreport to the lowest is the worst case for the degraded users. Fig. \ref{userGroupingChange}-(b) shows that the rate loss comes from two parts: the unfair power allocation in the last resource block that holds the misreporter and the user grouping change in the preceding $T{-}1$ resource blocks.

\textbf{Can Overreporting Harm the System Performance?}
From (\ref{PURLSKzfTCSI}), we observe that in one resource block, if one user overreports, the other users in that resource block will get a higher rate. Similar to the underreporting, however, the scheduling result will change if the overreporting is too high: the misreporter will be scheduled to one preceding resource block and replaced by one user in the previous resource block, like in Fig. \ref{userGroupingChange}-(c) where user $U_9$ overreports by $\beta_9^m>\beta_1$. Hence, the overreporting can also decrease some users' rates by changing the user grouping. Note that the rate loss in overreporting only comes from the user grouping change, and in the resource block that has the misreporter, the \true users get rate improvements.

Therefore, both underreporting and overreporting are able to change the scheduling result and decrease some users' rates. Note that the user grouping change also helps to improve some users' rates by grouping them with better users, like the user $U_4$ in Fig. \ref{userGroupingChange}-(b) and Fig. \ref{userGroupingChange}-(c).

\vspace{-4pt}

\subsection{Misreporting Strategy Changing User Grouping}\label{misStrChangingUG}
\vspace{-2pt}
Based on (\ref{PURLSKzfTCSI}) and Fig. \ref{userGroupingChange}, we obtain that the user grouping change is the worst case for the degraded users. Accordingly, we propose a misreporting strategy that aims to harm the degraded users to the most extent, regardless of some gaining benefits.
The misreporting strategy is described as follows.

\textbf{Underreporting that Totally Changes User Grouping.}
\textit{In underreporting, choose the $K_M$ users with the largest average SNRs and underreport their large-scale fading to the lowest value $\beta_{low}^{m}$ ($\beta_{low}^{m} < \beta_{K}$)}.

\textbf{Overreporting that Totally Changes User Grouping.}
\textit{In overreporting, choose the $K_M$ users with the lowest average SNRs and overreport their large-scale fading to the highest value $\beta_{high}^{m}$ ($\beta_{high}^{m} > \beta_{1}$)}.

\textbf{Periodicity Property.}
For this type of misreporting, the adversary would rather choose $K_M{<} K_{B}$ than use more misreporters. It's because when $K_M{=}K_{B}$, the misreporting users would occupy one resource block and the user grouping for \true users is invariable, thus no \true user is affected. In addition, when $K_M{>}K_{B}$, the remaining \true users would have the same rate losses with the corresponding users in the case that has $K_M{-}K_{B}$ misreporters, which means the rate loss of one specific \true user has a periodic property as $K_M$ grows. For example, in Fig. \ref{userGroupingChange}-(b), user $U_7$ has the same rate loss between the case where one user $U_1$ with the largest SNR underreports by $\beta_{low}^{m}$ and the case where the best four users $U_1$ to $U_4$ underreport to the lowest $\beta_{low}^{m}$.

Note that the user grouping result with $K_M$ users underreporting is the same as that using $K_{B}{-}K_M$ users overreporting.

\vspace{-0pt}
\subsection{Misreporting Strategy Keeping User Grouping Unchanged}
Fig. \ref{userGroupingChange} shows that there are users getting higher rates as long as the user grouping changes. In this subsection, we investigate the case where the misreporting degrees are not so large such that the user grouping result is not changed. To keep the user grouping fixed, the misreporting degrees have to be carefully designed. Besides, since the user grouping is unchanged, the overreporting cannot harm other users, instead,  it improves other users' rates based on (\ref{PURLSKzfTCSI}). Hence, we only analyze the case of underreporting.

Additionally, the misreporters aim to increase the average rate loss on \true users as the number of misreporters $K_M$ grows. Hence, the main idea is to reduce the per user rate in each resource block as low as possible. In particular, based on (\ref{PURLSKzfTCSI}), in one resource block, we choose the current best user as the misreporter and underreport its large-scale fading to be lower than that of the current worst user in that resource block, but higher than the large-scale fading coefficients of the next resource block to keep the user grouping unchanged.

Next, considering one scheduling period, as $K_M$ increases, we provide an algorithm that shows how to choose the misreporting users and underreport their channel magnitudes.

\vspace{2pt}
\textbf{User Grouping Unchanged Underreporting Algorithm.}
\vspace{0.5mm}
\hrule
\vspace{0.5mm}
Step 1) Initialization. Set $K_M=1$. Denote by $t_m$ the index of the resource block where the new misreporter is, and set $t_m=1$. Denote by $I_t$ the set of the misreporters' indices in the $t$-th resource block, and set $I_t=\emptyset$ for $1\leq t\leq T$.

Step 2) Select the user $U_{i_m}$ as the misreporter. The index of this new misreporter is determined by
\begin{equation}\label{idxNewMisreporter}
i_m = (t_m-1)K_{B} + \Big\lceil \frac{K_M}{T} \Big\rceil,
\end{equation}
where $\lceil x\rceil$ is the ceiling function of $x$.
Then update the set of misreporters' indices as $I_{t_m} {\leftarrow} [I_{t_m}; i_m]$.

Step 3) When $1{\leq} t_m{<}T$. If $I_{t_m{+}1}{\neq} \emptyset$, let the misreporters in the $t_m$-th resource block underreport by $\beta_{I_{t_m}}^m {=} \beta_{I_{t_m{+}1}(end)}$; else, let the misreporter $U_{i_m}$ underreport by $\beta_{i_m}^m{=}\frac{\beta_{t_mK_{B}}{+}\beta_{t_mK_{B}{+}1}}{2}$. Then update the underreporting degree of the misreporters in the previous resource block by $\beta_{I_{t_m-1}}^m\leftarrow \beta_{i_m}$ if $t_m>1$.

When $t_m=T$. Let the misreporters in the last resource block underreport by $\beta_{I_{T}}^m {=} \beta_{low}^{m}$. Then update $\beta_{I_{T{-}1}}^m {\leftarrow} \beta_{i_m}$.

Step 4) Update $K_M{\leftarrow} K_M{+}1$. Then the next misreporter locates in the resource block with index updated by $t_m{=} \mod{(\frac{K_M}{T})}$, and change $t_m$ to $T$ if $t_m{=}0$. Then go to step 2.
\vspace{0mm}
\hrule
\vspace{1mm}

An example of this algorithm is shown in Fig. \ref{userGroupingChange}-(d).

\textbf{Misreporting Effect Analysis.}
Under this misreporting strategy, no user gets rate improvement, as the user grouping result keeps fixed. Unlike the misreporting strategy in Section \ref{misStrChangingUG} where multiple misreporters are grouped together, this misreporting strategy distributes multiple misreporters in different resource blocks. As the user rate in each resource block that has misreporters is reduced as low as possible, the rate loss on \true users is predicted to increase as the number of misreporters increases, which is different from the ``periodicity'' property of the misreporting strategy that changes the user grouping.

\subsection{Performance Evaluation}
Taking into account the path loss and shadow fading, the large-scale fading coefficient $\beta_k$ is expressed as
\vspace{-0pt}
\begin{equation}\label{LScoeffi}
\beta_k = \frac{10^{\omega_k/10}}{ 1 + \left( d_k/d_0 \right)^l },
\vspace{-0pt}
\end{equation}
where $10^{\omega_k/10}$ represents the shadow fading in log-normal distribution with standard derivation of $\sigma$ dB, and $\omega_k \sim \mathcal{N}(0,\sigma^2)$ which is the normal distribution expressed in dB; and $d_k$ denotes the distance between $U_k$ and the BS, which is uniformly distributed between 0 and the cell radius $r$, and $d_0$ represents a reference distance; $l$ shows the path loss exponent.

As in Section \ref{numeResHom}, we set $M=64$, $K=32$, $K_{B}=8$, and $P=10$dB. We consider the cell radius $r = 500$m, and the path loss and shadow fading parameters are set to be: $l=3.8$, $d_0=200$m, and $\sigma = 8$dB. The lowest underreporting and the highest overreporting values are assumed to be $\beta_{low}^{m} {=} \beta_K/2$ and $\beta_{high}^{m} {=} 2\beta_1$, respectively.
We use the markers ``RR-LS'' and ``RR-CM'' to denote the RR scheduling depending only on the large-scale fading and the scheduling depending on the channel magnitudes which vary once the small-scale fading changes, respectively.

In Fig. \ref{RLindKa4underR}, we plot the rate change of every \true user with channel underreporting that totally changes user grouping, i.e., four users $U_1$ to $U_{4}$ underreport by $\beta_{low}^{m}$. This figure verifies that user grouping change can greatly decrease some \true users' rates in heterogeneous users case. For example, the degraded users due to user grouping change see higher than 18\% rate losses, that are users $U_5$ to $U_8$, $U_{13}$ to $U_{16}$, and $U_{21}$ to $U_{24}$. In the last resource block that holds the four misreporters, users $U_{29}$ to $U_{32}$ suffer about 68\% rate losses due to the unfair power allocation to the misreporters. Besides, it is verified that some users benefit from user grouping change as they are grouped with better users and grab some power from them.

\begin{figure}
\setlength{\abovecaptionskip}{0mm}
\centering
\includegraphics[width=2.7in]{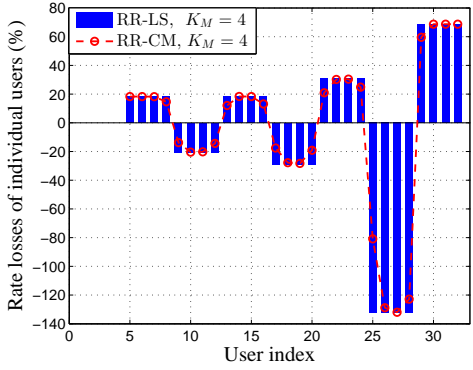}
\caption{Rate losses of individual users in user grouping changed underreporting. $U_{1}$ to $U_{4}$ underreports by $\beta_{low}^{m}$.}\label{RLindKa4underR}
\vspace{-4mm}
\end{figure}

\begin{figure}
\setlength{\abovecaptionskip}{0mm}
\centering
\includegraphics[width=2.7in]{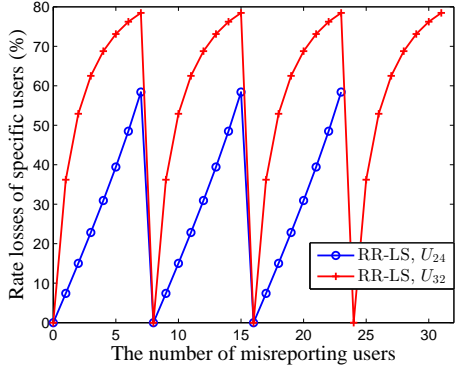}
\caption{Rate losses of users $U_{24}$ and $U_{32}$ vs. $K_M$ in user grouping changed underreporting.}\label{RLspeHetdiffKA}
\vspace{-5mm}
\end{figure}

Moreover, Fig. \ref{RLindKa4underR} shows that the differences of the rate losses between scheduling on large-scale fading and scheduling on channel magnitudes are small.

Fig. \ref{RLspeHetdiffKA} shows the rate losses of users $U_{24}$ and $U_{32}$ versus the number of misreporters $K_M$ in underreporting that totally change the user grouping. The figure demonstrates that one specific \true user has a periodic property as increasing $K_M$, suggesting the adversary to choose less than $K_{B}$ users to underreport as using this misreporting strategy.

Fig. \ref{RLPUHeterallTSdiffKA} describes the rate loss regarding the \true users with user grouping unchanged underreporting. Besides, rate losses of the case with user grouping changed underreporting are also plotted. By comparison, we see that the rate loss under user grouping unchanged underreporting increases with $K_M$ almost linearly, rather than periodically as in user grouping changed underreporting. For example, ten users underreporting can make other 22 \true users suffer about 18\% rate loss on average. Moreover, the random selection is observed to be more vulnerable to underreporting under both misreporting strategies, when $K_M$ is large.

In Fig. \ref{HetallTSdiffKAdifTSKMode2}, we investigate the impacts of the scheduling period $T$ and user number per resource block $K_{B}$ on the effectiveness of misreporting. The user grouping unchanged underreporting is adopted and the rate loss on  \true users is plotted. Fig. \ref{HetallTSdiffKAdifTSKMode2} shows that when the total user number $K$ is fixed, serving fewer users in one resource block can greatly mitigate the misreporting effect. However, in this case, the total system throughput loss is larger, as less multiplexing gain is utilized. Besides, as the scheduling period $T$ is fixed, it needs more misreporters to achieve the same amount of rate loss when one resource block supports more users.

\begin{figure}
\setlength{\abovecaptionskip}{0mm}
\centering
\includegraphics[width=2.7in]{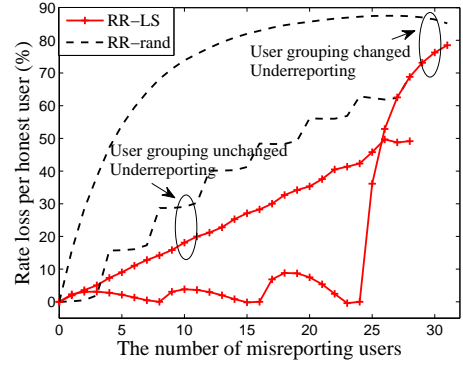}
\caption{Rate loss per \true user vs. $K_M$ under both user grouping unchanged underreporting and user grouping changed underreporting, respectively.}\label{RLPUHeterallTSdiffKA}
\vspace{-4mm}
\end{figure}

\begin{figure}
\setlength{\abovecaptionskip}{0mm}
\centering
\includegraphics[width=2.7in]{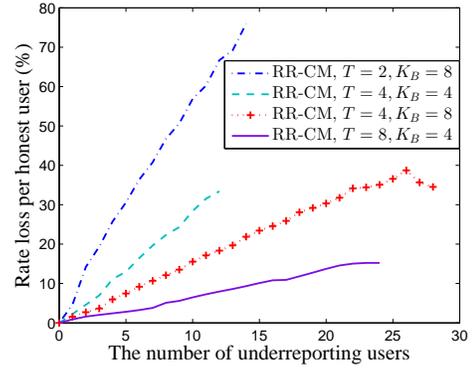}
\caption{Rate loss per \true user vs. $K_M$ in user grouping unchanged underreporting.}\label{HetallTSdiffKAdifTSKMode2}
\vspace{-5mm}
\end{figure}

\section{Related Work}
To the best of our knowledge, the impact of CSI misreporting on both the scheduler and precoding has not been previously studied. Work related to ours can be categorized into serving all users in one time-frequency resource block and scheduling with one user in one resource block.

\textbf{Serving All Users in One Time-Frequency Resource Block.}
The vulnerabilities of multi-user MIMO systems by reporting forged CSI was primitively investigated in \cite{Muk10ICASSP}, which considered the downlink of a multi-user multicast network that transmits a common message to multiple receivers. In \cite{Muk10ICASSP}, the authors mainly analyzed the influence on Quality-of-Service (QoS) for \true receivers caused by various forms of malicious channel feedback, and showed that the presence of just a single adversary could degrade the system performance significantly. Then, for the purposes of eavesdropping and selfish, the impact of channel state misreporting of a malicious user was studied in \cite{tung2014vulnerability} for a multi-user MIMO system that broadcasts independent messages to multiple users. The proposed ``sniffing attack" is an example of the channel direction modified CSI misreporting, which enables the attacker to eavesdrop the signal sent to the legitimate user. Besides, the ``power attack'' proposed in \cite{tung2014vulnerability} is to manipulate the Access Point's power allocation. By modifying the reported channel magnitude, the adversary can enhance its own capacity at the expense of others'. These purely physical layer analyses show the great harm of channel misreporting. Our analysis, however, shows that user scheduling can alleviate the damage and the rate loss shows a different characteristic of ``period'' increase and decrease with increasing number of misreporters.

\textbf{Scheduling with One User in One Time-Frequency Resource Block.} In the medium access control (MAC) layer,
there are some papers investigating the impact of the cheating of channel conditions on the scheduler performance \cite{Racic10TMC, kim2014falseCSIreport}. They considered the current 3G and LTE systems which have not deployed multi-user MIMO.  The scheduler selects only one user in each time-frequency resource block. The main idea to be malicious in this case is to steal as many resource blocks as possible \cite{kim2014falseCSIreport}. Hence, overreporting is a good attacking strategy for both proportional fair (PF) and max throughput schedulers, which try to select a user with a relatively good channel quality at each resource block. This implies underreporting inflicts no damage on others. For the round-robin scheduler, neither of overreporting and underreporting can gain resource blocks unfairly, as round-robin doesn't rely on channel conditions. In multi-user systems, however, multiple users share the transmission power in one joint transmission at each resource block. Therefore, both overreporting and underreporting have a chance to inflict threats to the system,  by subverting the user grouping result of the scheduler.

\section{Concluding Discussions}
We identified the threats of channel state misreporting on multi-user scheduling performance in massive MIMO systems, and showed that the performance of the fair round-robin scheduler combined with max-min fairness power control is very sensitive to channel magnitude misreporting. The rate losses of scheduling performance consist of two parts: (i) the rate loss of honest users in the resource block that holds the misreporting users and (ii) the rate loss caused by the change of user grouping. In addition, the ``periodic'' property of the rate loss function with the number of misreporters verifies how the multi-user scheduler performance loss analysis differs from the purely physical layer analyses and the analyses on the current scheduling systems. Moreover, we proposed two types of efficient misreporting strategies on the practical heterogeneous users case. Finally, numerical results verify the accuracy of the analytical results and show the effectiveness of the proposed misreporting strategies.


For FDD massive MIMO, it is still an open area for study in both literature and practical deployment. The most difficulty in FDD massive MIMO is to reduce the large overhead in CSI acquisition. Recently, \cite{xing15AoA} showed that it's possible to greatly reduce the overhead by exploiting the reciprocity of dominant AoA in massive MIMO. Additionally, a large spectrum of the current networks operate in the FDD mode, hence, the upgrade of the current FDD network to massive MIMO is highly possible. Actually, misreporting can be a problem in any system, FDD or time-division duplex (TDD). In the uplink channel estimation phase of TDD systems, the misreporter can transmit at a forged transmit power, hence misleading the BS to estimate a false channel magnitude for the misreporter. Therefore, the impact of misreporting channel magnitudes in FDD can be straightforwardly applied to TDD systems.

\vspace{0pt}

\vspace{5mm}
\bibliography{reference}
\end{document}